\chardef\mathlig@atcode\count255
\def\actively#1#2{\begingroup\uccode`\~=`#2\relax\uppercase{\endgroup#1~}}
\def\mathlig@gobble{\afterassignment\mathlig@next@cmd\let\mathlig@next= }
\def\mathlig@delim{\mathlig@delim}
\def\mathlig@defcs#1{\expandafter\def\csname#1\endcsname}
\def\mathlig@let@cs#1#2{\expandafter\let\expandafter#1\csname#2\endcsname}
\def\mathlig@appendcs#1#2{\expandafter\edef\csname#1\endcsname{\csname#1\endcsname#2}}
\def\mathlig#1#2{\mathlig@checklig#1\mathlig@end\mathlig@defcs{mathlig@back@#1}{#2}\ignorespaces}
\def\mathlig@checklig#1#2\mathlig@end{%
 \expandafter\ifx\csname mathlig@forw@#1\endcsname\relax
 \expandafter\mathchardef\csname mathlig@back@#1\endcsname=\mathcode`#1%
 \mathcode`#1"8000\actively\def#1{\csname mathlig@look@#1\endcsname}%
 \mathlig@dolig#1\mathlig@delim
\fi
\mathlig@checksuffix#1#2\mathlig@end
}
\def\mathlig@checksuffix#1#2\mathlig@end{%
\ifx\mathlig@delim#2\mathlig@delim\relax\else\mathlig@checksuffix@{#1}#2\mathlig@end\fi
}
\def\mathlig@checksuffix@#1#2#3\mathlig@end{%
\expandafter\ifx\csname mathlig@forw@#1#2\endcsname\relax\mathlig@dosuffix{#1}{#2}\fi
\mathlig@checksuffix{#1#2}#3\mathlig@end
}
\def\mathlig@dosuffix#1#2{%
\mathlig@appendcs{mathlig@toks@#1}{#2}%
\mathlig@dolig{#1}{#2}\mathlig@delim
}
\def\mathlig@dolig#1#2\mathlig@delim{%
 \mathlig@defcs{mathlig@look@#1#2}{%
 \mathlig@let@cs\mathlig@next{mathlig@forw@#1#2}\futurelet\mathlig@next@tok\mathlig@next}%
 \mathlig@defcs{mathlig@forw@#1#2}{%
  \mathlig@let@cs\mathlig@next{mathlig@back@#1#2}%
  \mathlig@let@cs\checker{mathlig@chck@#1#2}%
  \mathlig@let@cs\mathligtoks{mathlig@toks@#1#2}%
  \expandafter\ifx\expandafter\mathlig@delim\mathligtoks\mathlig@delim\relax\else
  \expandafter\checker\mathligtoks\mathlig@delim\fi
  \mathlig@next
 }%
 \mathlig@defcs{mathlig@toks@#1#2}{}%
 \mathlig@defcs{mathlig@chck@#1#2}##1##2\mathlig@delim{%
  \ifx\mathlig@next@tok##1%
   \mathlig@let@cs\mathlig@next@cmd{mathlig@look@#1#2##1}\let\mathlig@next\mathlig@gobble
  \fi 
  \ifx\mathlig@delim##2\mathlig@delim\relax\else
   \csname mathlig@chck@#1#2\endcsname##2\mathlig@delim
  \fi
 }%
%
 \ifx\mathlig@delim#2\mathlig@delim\else
  \mathlig@defcs{mathlig@back@#1#2}{\csname mathlig@back@#1\endcsname #2}%
 \fi
}%
\newcommand{\muspace}{\mspace{1mu}}
\DeclareRobustCommand{\scond}{\mathchoice{\muspace\vert\muspace}{\vert}{\vert}{\vert}}
\DeclareRobustCommand{\discint}{\mathchoice{\mspace{-1.5mu}:\mspace{-1.5mu}}{\mspace{-1.5mu}:\mspace{-1.5mu}}{:}{:}}
\def\var{\mathop{\rm Var}\nolimits}%
\def\co{\mathop{\rm co}\nolimits}%
\newcommand{\Cc}{\mathcal{C}}
\newcommand{\Pc}{\mathcal{P}}
\newcommand{\Sc}{\mathcal{S}}
\newcommand{\Xc}{\mathcal{X}}
\newcommand{\Yc}{\mathcal{Y}}
\newcommand{\Cr}{\mathscr{C}}
\newcommand{\Rr}{\mathscr{R}}
\def\a{\alpha}
\newcommand{\U}{\mathrm{Unif}}
\def\textiid{i.i.d.\@\xspace}
\newcommand\iid{\ifmmode\text{ i.i.d. } \else \textiid \fi}
\def\mathllap{\mathpalette\mathllapinternal}
\def\mathllapinternal#1#2{%
  \llap{$\mathsurround=0pt#1{#2}$}}
\def\clap#1{\hbox to 0pt{\hss#1\hss}}
\def\mathclap{\mathpalette\mathclapinternal}
\def\mathclapinternal#1#2{%
  \clap{$\mathsurround=0pt#1{#2}$}}
\let\oldstackrel\stackrel
\renewcommand{\stackrel}[2]{\oldstackrel{\mathclap{#1}}{#2}}
\renewcommand{\hbar}{h\mathllap{\overline{\vphantom{h}\hphantom{\rule{4.6pt}{0pt}}}\mspace{0.77mu}}}
\newcommand{\urltilde}{\kern -.06em\lower -.06em\hbox{~}\kern .02em}
\newtheorem{theorem}{\textbf{Theorem}}
\newtheorem{lemma}{\textbf{Lemma}}
\newtheorem{proposition}{\textbf{Proposition}}
\newtheorem{example}{\textbf{Example}}
\newcommand{\bq}{\bar{p}_2}
\newcommand{\bp}{\bar{p}_1}
\newcommand{\q}{p_2}
\newcommand{\p}{p_1}
\newcommand{\argmax}{\operatornamewithlimits{arg\,max}}
\begin{document}
\title{Capacity Region of the Broadcast Channel with Two Deterministic Channel State Components}

\author{Hyeji Kim and Abbas El Gamal\\
Department of Electrical Engineering\\
 Stanford University\\
 Email: hyejikim@stanford.edu, abbas@ee.stanford.edu
\thanks{ This work was partially supported by Air Force grant FA9550-10-1-0124.   }%
}

\maketitle
\begin{abstract} This paper establishes the capacity region of a class of broadcast channels with random state in which each channel component is selected from two possible functions and each receiver knows its state sequence. This channel model does not fit into any class of broadcast channels for which the capacity region was previously known and is useful in studying wireless communication channels when the fading state is known only at the receivers. The capacity region is shown to coincide with the UV outer bound and is achieved via Marton coding.
\end{abstract}

\section{Introduction}
The 2-receiver {\em  broadcast channel with two deterministic channel states} (or BC-TDCS in short) is a discrete memoryless broadcast channel with random state $(\Xc \times \Sc, p(s)p(y_1,y_2|x,s), \Yc_1 \times \Yc_2)$, where $S=(S_1,S_2)\in \{1,2\}^2$,  $p_{S_1}(1)=p_1, p_{S_1}(2)=1-p_1=\bar p_1$ and $p_{S_2}(1)=p_2, p_{S_2}(2)=\bar p_2$, and the outputs 
\begin{align*}
Y_1 = \begin{cases}	f_1(X) &\text{ if } S_1=1,\\ 
				f_2(X) &\text{ if } S_1=2,
	\end{cases}\\
Y_2 = \begin{cases} f_1(X) &\text{ if } S_2=1,\\
				f_2(X) &\text{ if } S_2= 2
	\end{cases}
\end{align*}
for some deterministic functions $f_1$ and $f_2$ of the input $X$. As an example of a BC-TDCS, consider the following. 

\begin{example}[Blackwell channel with state~\cite{dstate}]\label{ex:blackwell}
\textnormal{The functions $f_1$ and $f_2$ for this example are depicted in Figure~\ref{fig:bc-blackwell}.
\begin{figure}[h]
\begin{center}
\psfrag{0}[r]{0}
\psfrag{1}[r]{2}
\psfrag{2}[r]{1}
\psfrag{X}[c]{$X$}
\psfrag{Y1}[c]{$f_1(X)$}
\psfrag{Y2}[c]{$f_2(X)$}
\psfrag{z}[l]{0}
\psfrag{o}[l]{1}
\includegraphics[scale=0.32]{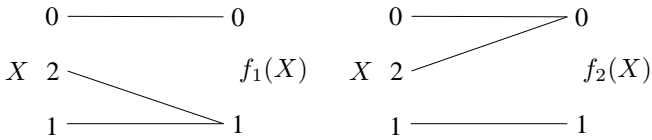}
\caption{The deterministic components of the Blackwell channel with state.} 
\label{fig:bc-blackwell}
\end{center}
\end{figure}}
 \end{example}
\vspace*{-.25in} 
In this paper, we consider the setup in which the sender wishes to transmit an independent message $M_j \in [1:2^{nR_j}]$ to receiver $j\in \{1,2\}$ and receiver $j$ knows the state sequence $S_j^n$ but the sender does not. 
We define achievable rate pairs $(R_1,R_2)$ in the standard way~\cite{El-Gamal--Kim2011} and the capacity region $\Cr$ as the closure of the set of all achievable rate pairs.

It is easy to see that the capacity region of this channel is the same as that of the broadcast channel with input $X$ and outputs $(Y_1,S_1)$ and $(Y_2,S_2)$. This equivalent broadcast channel, however, does not belong to any class of channels with known capacity region (see~\cite{Chandra2014} for classes of broadcast channels with known capacity). Also, very little is known about the capacity region of the broadcast channel with random state known only at the receivers. Previous work on this setting has focused mainly on the Gaussian fading BC with superposition coding~\cite{Jafarian--Vishwnath2011}, time division with power control~\cite{Liang--Goldsmith2005}, and a superposition of binary inputs motivated by a capacity achieving strategy for a layered erasure broadcast channel~\cite{tse_fadingBC}. Even when the fading BC is degraded and superposition coding is optimal, Gaussian input distribution is not in general optimal and capacity remains unknown~\cite{Abbe--Lizhong2009}. 

There has been more work on the broadcast channel with causal and noncausal state information known at the transmitter. In~\cite{reza}, the capacity region of the deterministic BC when the state is known noncausally at the transmitter is established. In~\cite{lapidoth_semideterministic}, this result is extended to semideterministic BC, and it is shown that the capacity region does not enlarge when the state is also known at the receivers. 

There has also been work on the setting in which the state is known at the receivers and only {\em strictly causally} at the transmitter. In~\cite{Larsson--Johansson2006,Georgiadis--Tassiulas2009}, the capacity region of the binary erasure broadcast channel with state under this setting is established. In~\cite{Ali--Tse2010} it is shown via two examples that strictly causal state information at the transmitter can enlarge the capacity region of the broadcast channel with state. In~\cite{dstate}, it is shown that the scheme in~\cite{Ali--Tse2010} is a special case of a straightforward adaptation of the feedback scheme in~\cite{Shayevitz--Wigger2013}. The Blackwell broadcast channel with state in Example~\ref{ex:blackwell} is also introduced and an achievable rate region is established when $p_1=p_2=0.5$ (and the state in known at the receivers and strictly casually at the transmitter).  

In this paper we establish the capacity region of the BC-TDCS when the state is known only at the receivers. Achievability is established using Marton coding~\cite{Marton1979}. The key observation is that the auxiliary random variables in the Marton region characterization, $U_1$ and $U_2$, are always set to $f_1,f_2,X$, or $\emptyset$. In particular if the channel from $X$ to $Y_1$ is more likely to be $f_1$ than the channel from $X$ to $Y_2$, then $(U_1, U_2)$ are set to $(X,\emptyset),(\emptyset,X)$, or $(f_1,f_2)$. The converse is established by showing that the Marton inner bound with these extreme choices of auxiliary random variables coincides with the UV outer bound~\cite{Nair--El-Gamal2006}. 

Our result is significant for several reasons:
\begin{itemize}
\item It establishes the capacity region of a new class of broadcast channels---our setting does not belong to any class of broadcast channels with previously known capacity region.
\item It establishes the capacity region of a nontrivial class of broadcast channels with state known at the receivers---a setting with very few known results.
\item It provides yet another class of broadcast channels for which Marton coding is optimal.
\item Our channel model can be used to approximate certain fading broadcast channels in high SNR (see Example~\ref{ex:finite} in Section~\ref{section2}). 
\end{itemize}

\section{Capacity Region of the BC-TDCS}\label{section2}
Without loss of generality, assume $p_1 \geq p_2$. We now state the main result of this paper.
\begin{theorem}\label{capacity:Marton}
\textnormal{The capacity region of the BC-TDCS $(\Xc \times \Sc, p(s)p(y_1,y_2|x,s), \Yc_1 \times \Yc_2)$ with the state known only at the receivers is the convex hull of the set of all rate pairs $(R_1,R_2)$ such that 
\begin{align}
\begin{split}\label{thm1}
R_1 &\leq I(U_1;Y_1|S),\\
R_2 &\leq I(U_2;Y_2|S),\\
R_1+R_2 &\leq I(U_1;Y_1|S)+I(U_2;Y_2|S)-I(U_1;U_2)
\end{split}
\end{align}
for some $p(x)$ and either $(U_1,U_2)=(f_1,f_2)$, $(U_1,U_2)=(X,\emptyset)$, or $(U_1,U_2)=(\emptyset,X)$.
}\end{theorem}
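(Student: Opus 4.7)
I split the argument into achievability and converse, both hinging on viewing the BC-TDCS as an equivalent standard BC with input $X$ and outputs $\Yt_j\defeq(Y_j,S_j)$ for $j=1,2$, which is valid since receiver $j$ knows $S_j$.

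\emph{Achievability.} I apply Marton's inner bound~\cite{Marton1979} to the equivalent BC, obtaining
\begin{align*}
R_1 &\leq I(U_1;\Yt_1),\\
R_2 &\leq I(U_2;\Yt_2),\\
R_1+R_2 &\leq I(U_1;\Yt_1)+I(U_2;\Yt_2)-I(U_1;U_2)
\end{align*}
for every joint law $p(u_1,u_2)p(x|u_1,u_2)$. Because $S_j$ is independent of $(U_j,X)$ and $Y_j$ is conditionally independent of $S_{3-j}$ given $(X,S_j)$, each term $I(U_j;\Yt_j)$ reduces to $I(U_j;Y_j|S)$. Evaluating Marton's region for the three prescribed pairs $(U_1,U_2)$, taking the union over $p(x)$ and then the convex hull, yields the desired inner bound.

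\emph{Converse.} I invoke the Nair--El Gamal UV outer bound~\cite{Nair--El-Gamal2006} for the equivalent BC: any achievable $(R_1,R_2)$ satisfies, for some $(U,V,X)$ with $(U,V)\to X\to(\Yt_1,\Yt_2)$,
\begin{align*}
R_1 &\leq I(U;\Yt_1),\quad R_2\leq I(V;\Yt_2),\\
R_1+R_2 &\leq I(U;\Yt_1)+I(X;\Yt_2|U),\\
R_1+R_2 &\leq I(V;\Yt_2)+I(X;\Yt_1|V).
\end{align*}
Under the BC-TDCS structure these quantities simplify: $I(U;\Yt_1)=p_1 I(U;f_1(X))+\bar p_1 I(U;f_2(X))$ and $I(X;\Yt_2|U)=p_2 H(f_1(X)|U)+\bar p_2 H(f_2(X)|U)$, with analogous identities after swapping $(U,V)$ and the indices. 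Algebraic manipulation rewrites the first sum-rate bound as
\begin{align*}
R_1+R_2 &\leq p_1 H(f_1(X))+\bar p_1 H(f_2(X))\\
&\quad-(p_1-p_2)\bigl[H(f_1(X)|U)-H(f_2(X)|U)\bigr].
\end{align*}
Since $p_1\geq p_2$, this bound is tightest when $H(f_1(X)|U)-H(f_2(X)|U)$ is maximally negative, so the extremal candidates for $U$ are $\emptyset$, $f_1(X)$, or $X$. A dual manipulation of the second sum-rate bound identifies extremal $V\in\{\emptyset,f_2(X),X\}$. Substituting these extremal pairs and combining with the individual rate bounds, the UV outer region collapses onto exactly the convex hull of the three Marton regions in~\eqref{thm1}.

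\emph{Main obstacle.} The delicate step is this extremization of the UV outer bound. The individual rate bounds and the two sum-rate bounds depend on $(U,V)$ in opposing ways, and one must verify that the three prescribed pairs suffice to trace out every corner of the outer region. The asymmetry $p_1\geq p_2$ is essential both in ruling out the ``cross'' pair $(f_2(X),f_1(X))$ and in justifying the asymmetric form of the claimed region.
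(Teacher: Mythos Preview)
Your overall strategy matches the paper's: Marton for achievability, the Nair--El~Gamal UV outer bound for the converse, and the identification of $U\in\{\emptyset,f_1,X\}$, $V\in\{\emptyset,f_2,X\}$ as the relevant auxiliaries. The gap is that you stop precisely at the point you yourself call the ``main obstacle'' and never close it. Observing that, for fixed $p(x)$, the first sum-rate expression is largest at $U=f_1$ is correct (and can be proved via $H(f_1|U)-H(f_2|U)=H(f_1|f_2,U)-H(f_2|f_1,U)\ge -H(f_2|f_1)$), but this by itself says nothing about the boundary of $\bar\Rr_1$, because the individual constraint $R_1\le I(U;Y_1|S)$ pulls toward $U=X$. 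Your sentence ``substituting these extremal pairs\ldots the UV outer region collapses onto exactly the convex hull of the three Marton regions'' is the conclusion to be proved, not an argument. (Also, a wording slip: when $H(f_1|U)-H(f_2|U)$ is maximally negative the sum-rate bound is \emph{largest}, not tightest; since $\bar\Rr_1$ is a union over $p(u_1,x)$, the loosest bound is what you want.)

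The paper fills this gap with a supporting-hyperplane computation rather than a direct corner enumeration. For each $\lambda\ge 0$ it evaluates $\max_{(R_1,R_2)\in\bar\Rr}\,R_1+\lambda R_2$, using $\bar\Rr_1$ for $0\le\lambda\le 1$ and $\bar\Rr_2$ for $\lambda>1$. In the first range the optimization over $p(u_1|x)$ reduces to maximizing
\[
(\lambda\bar p_2-\bar p_1)H(f_2|U_1)+(\lambda p_2-p_1)H(f_1|U_1),
\]
and a sign analysis against the threshold $\lambda=\bar p_1/\bar p_2$ shows the optimizer is $U_1=X$ for $\lambda\le\bar p_1/\bar p_2$ and $U_1=f_1$ for $\bar p_1/\bar p_2<\lambda\le 1$; the range $\lambda>1$ is handled symmetrically with threshold $p_1/p_2$. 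This case split in $\lambda$ is exactly the mechanism that resolves the trade-off you flag between the sum-rate and individual-rate constraints, and it is the substantive content of the converse that your proposal is missing.
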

Achievability follows immediately since~\eqref{thm1} is contained in Marton's rate region. The converse is proved in Section~\ref{sec:converse}.

Now consider the following more explicit characterization of the capacity region which we will use in the examples and the converse. 
\begin{proposition} \label{capacity}
\textnormal{The capacity region of the BC-TDCS with the state known only at the receivers is the convex hull of the union of four rate regions: 
\begin{align}
\begin{split}\label{proposition}
\Rr_1 = \{(R_1,R_2)\colon &R_1 \leq C_1, R_2 = 0\},\\
\Rr_2 = \{(R_1,R_2)\colon &R_1 =0, R_2 \leq C_2\},\\
\Rr_3 = \{(R_1,R_2)\colon &R_{1} \leq \p H(f_{1})+\bp I(f_{1};f_{2}),\\ 
					     &R_{2} \leq \bq H(f_{2}|f_{1})\\
					     &\text{for some }p(x) \in \Pc_1\},\\
\Rr_4 = \{(R_1,R_2)\colon &R_{1} \leq \p H(f_{1}|f_{2}),\\ 
					      &R_{2} \leq \q I(f_{1};f_{2})+\bq H(f_{2})\\
					      &\text{for some }p(x) \in \Pc_2\},
\end{split}
\end{align}
where $C_j = \max_{p(x)} I(X;Y_j|S)$ for $j=1,2$, and
\begin{align*}
&\Pc_1=\{\argmax_{p(x)}\, \p H(f_1) + \bp I(f_1;f_2) + \lambda \bq H(f_2|f_1)\colon\\
&\qquad\qquad\text{for some } \bp/\bq \leq \lambda \leq 1\},\\
&\Pc_2=\{\argmax_{p(x)}\, \p H(f_1|f_2) + \lambda \q I(f_1;f_2)+\lambda \bq H(f_2)\colon\\ 
&\qquad\qquad\text{for some } 1 \leq \lambda \leq \p/\q\}.
\end{align*}}
\end{proposition}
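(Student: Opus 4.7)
The plan is to specialize the Marton inner bound of Theorem~\ref{capacity:Marton} to each of the three allowed pairs $(U_1,U_2)$ and then identify the convex hull of the resulting regions (unioned over $p(x)$) with the convex hull of $\Rr_1 \cup \Rr_2 \cup \Rr_3 \cup \Rr_4$. Direct substitution into~\eqref{thm1} shows that $(U_1,U_2)=(X,\emptyset)$ yields exactly $\Rr_1$ and $(U_1,U_2)=(\emptyset,X)$ yields exactly $\Rr_2$ after maximizing over $p(x)$. For the remaining choice $(U_1,U_2)=(f_1,f_2)$, I will use the independence of $S_1$ and $S_2$ together with $Y_j=f_{S_j}(X)$ to compute $I(f_1;Y_1|S)=\p H(f_1)+\bp I(f_1;f_2)$, $I(f_2;Y_2|S)=\q I(f_1;f_2)+\bq H(f_2)$, and $I(U_1;U_2)=I(f_1;f_2)$. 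The resulting Marton pentagon for a fixed $p(x)$ has exactly two non-axis corners, $A(p)=(\p H(f_1)+\bp I(f_1;f_2),\,\bq H(f_2|f_1))$ and $B(p)=(\p H(f_1|f_2),\,\q I(f_1;f_2)+\bq H(f_2))$, joined by a segment of slope $-1$; these are precisely the dominant points defining $\Rr_3$ and $\Rr_4$.

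I will then trace the upper boundary of the convex hull using support lines $R_1+\lambda R_2 = \text{const}$ for each $\lambda\geq 0$. Within a single pentagon, corner $A$ supports slopes $\lambda\leq 1$ and corner $B$ supports slopes $\lambda\geq 1$. The critical identities at the transition slopes are: at $\lambda=\bp/\bq$ the corner-$A$ value collapses to $\p H(f_1)+\bp H(f_2)=I(X;Y_1|S)$, whose maximum over $p(x)$ is $C_1$, matching the support value of $(C_1,0)\in\Rr_1$; and at $\lambda=\p/\q$ the corner-$B$ value collapses to $(\p/\q)\,I(X;Y_2|S)$, whose maximum is $(\p/\q)\,C_2$, matching the support value of $(0,C_2)\in\Rr_2$. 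For $\lambda\in[\bp/\bq,1]$ the maximizing $p(x)$ of the corner-$A$ support value lies in $\Pc_1$ by definition, tracing out $\Rr_3$; for $\lambda\in[1,\p/\q]$ the corresponding maximizer lies in $\Pc_2$, tracing out $\Rr_4$. Outside this central range I will verify domination via the bounds $I(f_1;f_2)\leq H(f_2)$ (so that for $\lambda<\bp/\bq$ the corner-$A$ support value never exceeds $C_1$) and $I(f_1;f_2)\leq H(f_1)$ (so that for $\lambda>\p/\q$ the corner-$B$ support value never exceeds $\lambda C_2$). Thus every support direction of the capacity region is realized by a point in $\Rr_1\cup\Rr_2\cup\Rr_3\cup\Rr_4$.

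The main obstacle is pure bookkeeping: computing the pentagon corner coordinates under $(U_1,U_2)=(f_1,f_2)$, identifying the correct transition slopes $\bp/\bq$, $1$, and $\p/\q$, and verifying at each transition that two neighbouring regions share a common support value so that no gap is left in the boundary of the capacity region. Once these are in place, the proposition is immediate since a convex set is determined by its support function in every direction.
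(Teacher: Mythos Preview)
Your proposal is correct and follows essentially the same route as the paper: specialize Theorem~\ref{capacity:Marton} to the three auxiliary choices, extract the two pentagon corners $A(p),B(p)$ for $(U_1,U_2)=(f_1,f_2)$, and compare support functions $R_1+\lambda R_2$ across the transition slopes $\bp/\bq,\ 1,\ \p/\q$ to conclude equality of convex hulls (the paper phrases this last step via Eggleston's lemma, which is equivalent to your ``a convex set is determined by its support function''). One minor point: the identities $I(f_1;Y_1|S)=\p H(f_1)+\bp I(f_1;f_2)$ etc.\ require only the \emph{marginals} of $S_1,S_2$ (since $Y_j$ is a function of $(X,S_j)$ and $X\perp S$), not their independence, which the paper does not assume.
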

\begin{proof} Let $\Cr$ and $\Cr_0$ denote the region defined in~\eqref{thm1} and in~\eqref{proposition}, respectively. All we need to show is that $\Cr_0=\Cr.$
First note that we can express $\Cr$ as the convex hull of the union of the four regions:
\begin{align}\begin{split}\label{primes}
\Rr'_1 = \{(R_1,R_2)\colon &R_1 \leq I(X;Y_1|S), R_2=0 \text{ for some } p(x)\},\\ 
\Rr'_2 = \{(R_1,R_2)\colon &R_1=0, R_2 \leq I(X;Y_2|S) \text{ for some } p(x)\},\\ 
\Rr'_3 = \{(R_1,R_2)\colon &R_1 \leq I(f_1;Y_1|S),\\
					&R_2 \leq I(f_2;Y_2|S)- I(f_1;f_2)\\
					&\text{for some } p(x) \},\\
\Rr'_4 = \{(R_1,R_2)\colon &R_1 \leq I(f_1;Y_1|S) - I(f_1;f_2),\\
					&R_2 \leq I(f_2;Y_2|S)\\
					&\text{for some } p(x)\}.
\end{split}\end{align}
Clearly $\Rr_1 = \Rr'_1$, $\Rr_2 = \Rr'_2$, $\Rr_3 \subseteq \Rr'_3$, and $\Rr_4 \subseteq \Rr'_4$. Thus, $\Cr_0\subseteq \Cr$. 
We now show that every supporting hyperplane of $\Cr$ intersects $\Cr_0$, i.e., for every  $\lambda \geq 0$, there exists a rate pair $(R_1,R_2)\in \Cr_0$ such that $R_1+\lambda R_2 = \max_{(R_1,R_2)\in \Cr}R_1+\lambda R_2$. 
\begin{lemma}\label{lem4}\textnormal{Every supporting hyperplane of $\Cr$ intersects $\Cr_0$, i.e., for all $\lambda \geq 0$,
\begin{align*}
&\max_{(R_1,R_2) \in \Cr}R_1+\lambda R_2 =\max_{(R_1,R_2) \in \Cr_0}R_1+\lambda R_2.
\end{align*}
}\end{lemma}
The proof of this lemma is in Appendix~\ref{appendix}. 

To complete the proof we use the following.
\begin{lemma}\label{subseteq}
\textnormal{~\cite{Eggleston1958} Let $\Rr \in \mathbb{R}^d$ be convex and $\Rr_1 \subseteq \Rr_2$ be two bounded convex subsets of $\Rr$, closed relative to $\Rr$. If every supporting hyperplane of $\Rr_2$ intersects $\Rr_1$, then $\Rr_1 = \Rr_2$. 
}\end{lemma}
\end{proof}
\smallskip

\noindent{\bf Example \hspace{-1pt}1 (continued)} 
\textnormal{The capacity region of the Blackwell channel with state known only to the receivers is the convex hull of the union of:
\begin{align*} 
\Rr'_3 =\{(R_1,R_2)\colon\, &R_1 \leq H(\a_0) - \bp \bar{\a}_1 H(\a_0/\bar{\a}_1),\\
						&R_2 \leq \bq\bar{\a}_0 H(\a_1/\bar{\a}_0)\\
						&\text{for some } \a_0,\a_1\geq 0, \a_0+\a_1 \leq 1\},\text{ and}\\
\Rr'_4 = \{(R_1,R_2)\colon\, &R_1 \leq \p\bar{\a}_1H(\a_0/\bar{\a}_1),\\
						&R_2 \leq H(\a_1) - \q\bar{\a}_0H(\a_1/\bar{\a}_0)\\
						&\text{for some } \a_0,\a_1\geq 0, \a_0+\a_1 \leq 1\}.
\end{align*} 
To show this, we evaluate $\Rr'_3$ and $\Rr'_4$ in~\eqref{primes} and note that the rate pairs $(C_1,0)=(1,0) \in \Rr'_3$ and $(0,C_2)=(0,1) \in \Rr'_4$. Hence, $\Cc$ is the convex hull of the union of $\Rr'_3$ and $\Rr'_4$.  
The capacity region with state for $(p_1,p_2)=(0.5, 0.5),\, (0.7,0.3)$, and $(1,0)$ is plotted in Figure~\ref{fig2}. For $(p_1,p_2)=(0.5, 0.5)$, the two channels are statistically identical, hence the capacity region coincides with the time-division region.  For $(p_1,p_2)=(1,0)$, the channel reduces to the Blackwell channel with no state~\cite{blackwell}. For $(p_1,p_2)$ in between these two extreme cases, the capacity region is established by our theorem.} 
 \begin{figure}[h] 
  \centering
\psfrag{a1}[c]{0}
\psfrag{a2}[c]{}
\psfrag{b}[c]{1}
\psfrag{c}[c]{1}
\psfrag{e}[c]{}
\psfrag{f}[c]{}
\psfrag{n}[b][b][0.9]{$\qquad(0.5,0.5)$}
\psfrag{k}[c][l][0.9]{$\qquad(0.7,0.3)$}
\psfrag{m}[r][l][0.9]{$\qquad(1,0)$}
\psfrag{g}[b]{$R_2$}
\psfrag{h}[l]{$R_1$}
    \includegraphics[scale=0.45]{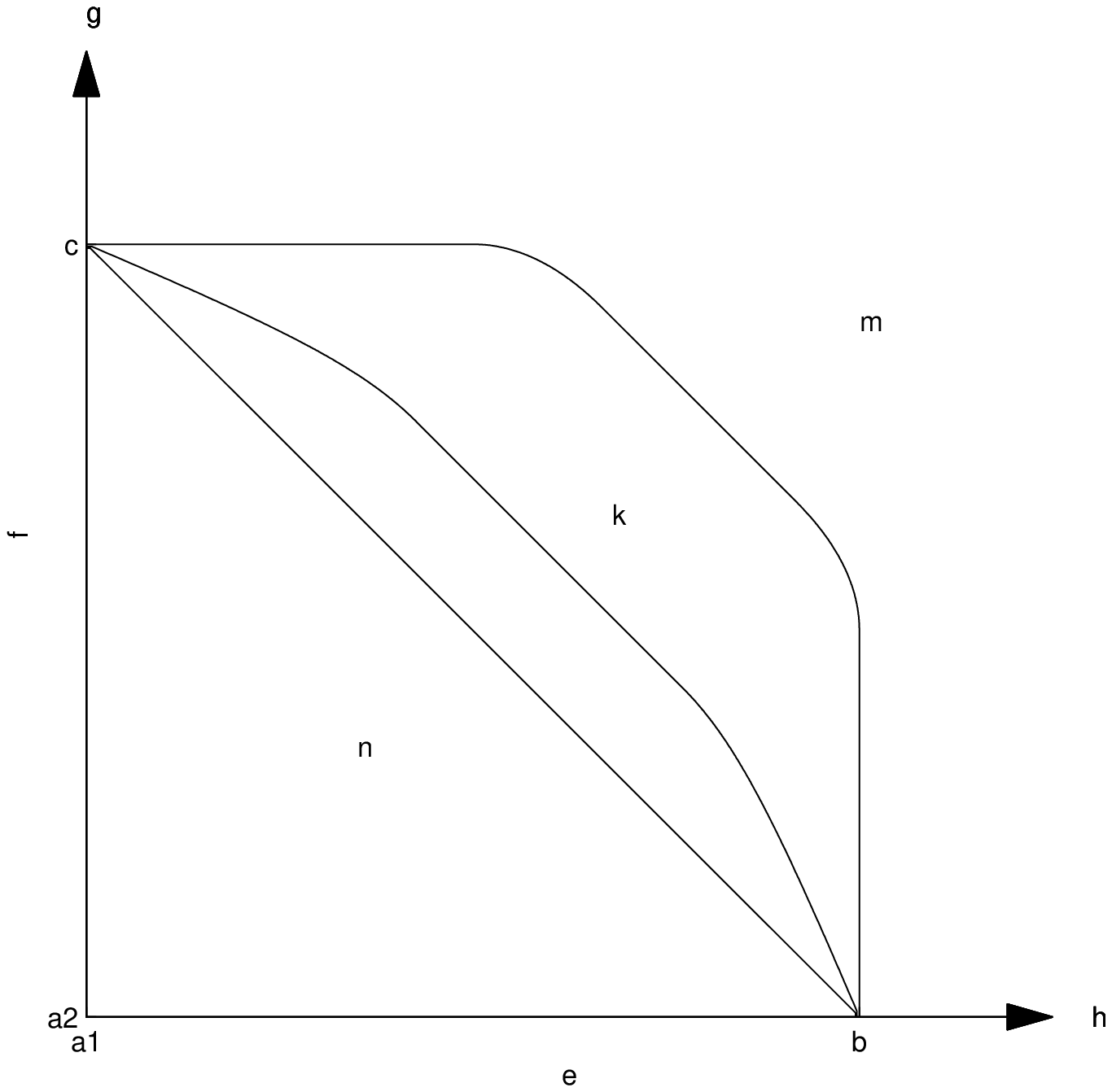}
      \caption{Capacity region of the Blackwell channel with the state. }
      \label{fig2}
 \end{figure}

Next consider the following example which is motivated by deterministic approximations of wireless channels.
\begin{example}[Finite-field BC-TDCS] \label{ex:finite}
\textnormal{Consider the BC-TDCS  with the state known only at the receivers with $\mathbf{X}= \begin{bmatrix}X_1&X_2 \end{bmatrix}^T$:
\begin{align}\begin{split}\label{ex2}
Y_1 = \begin{cases}	h_{11}X_{1}+h_{12}X_2 &\text{ if } S_1=1,\\ 
				h_{21}X_{1}+h_{22}X_2 &\text{ if } S_1=2,
	\end{cases}\\
Y_2 = \begin{cases} h_{11}X_{1}+h_{12}X_2&\text{ if } S_2=1,\\
				h_{21}X_{1}+h_{22}X_2&\text{ if } S_2= 2,
	\end{cases}
\end{split}\end{align}
where the channel matrix is full-rank, $\Yc_1=\Yc_2=\Xc_1=\Xc_2=[0:K-1]$, and the arithmetic is over the finite field.} 

\textnormal{To compute the capacity region, first note that $C_1=\log K$ and  $C_2=\log K$. Thus, }
    \begin{align*}
    \Rr_1 = \{(R_1,R_2)\colon &R_1 \leq \log K, R_2 = 0 \},\\
    \Rr_2 = \{(R_1,R_2)\colon &R_1 =0, R_2 \leq \log K\}.
    \end{align*} 
\textnormal{To evaluate $\Rr_3$ and $\Rr_4$, we compute $\Pc_1$ and $\Pc_2$. Since }
    \begin{align*}
&\p H(f_1)+\bp I(f_1;f_2)+\lambda \bq H(f_2|f_1)\\
&\qquad=\p H(f_1)+\bp H(f_2)+(\lambda \bq-\bp) H(f_2|f_1)\\
&\qquad\leq (\p+\lambda\bq)\log K
\end{align*}
\textnormal{for $\bp/\bq \leq \lambda \leq 1$ with equality if $\mathbf{X} \sim \U ([0:K-1]^2)$, $\Pc_1=\left\{ \U( [0:K-1]^2)\right\}$.
Similarly, $\Pc_2=\left\{\U ( [0:K-1]^2) \right\}$.
Note that when $\mathbf{X}$ is uniform,  $H(f_{1})=H(f_{2})=H(f_{1}|f_{2})=H(f_{2}|f_{1})=\log K$. Hence,}
    \begin{align*}
    \Rr_3 = \{(R_1,R_2)\colon &R_1 \leq \p\log K, R_2 \leq \bq \log K \},\\
    \Rr_4 = \{(R_1,R_2)\colon &R_1 \leq \p\log K, R_2 \leq \bq \log K \},
    \end{align*}
\textnormal{and the capacity region is}
\begin{align*}
\Cr =\co \{(0,0),(\log K,0),(0,\log K), (\p \log K, \bq \log K)\}.
\end{align*}

\textnormal{Figure~\ref{fig3} plots the capacity region for $(p_1,p_2)=(0.5,0.5), (0.7,0.4)$, and $(1,0)$. For $(p_1,p_2)=(0.5,0.5)$, the two channels are statistically identical and the capacity region coincides with the time-division region. For $(p_1,p_2)=(1,0)$, the capacity region is $\{(R_1,R_2)\colon R_1 \le \log K,\; R_2 \le \log K\}$ because the channel matrix is full-rank. For $(p_1,p_2)$ in between these two extreme cases, the capacity region is established by our theorem.} 

 \begin{figure}[h!]
\raggedright
 \psfrag{a1}[c]{0}
\psfrag{a2}[c]{}
\psfrag{b}[c]{1}
\psfrag{c}[r]{1}
\psfrag{e}[c]{0.7}
\psfrag{f}[r]{0.6}
\psfrag{n}[c][][0.9]{$\qquad(0.5,0.5)$}
\psfrag{k}[c][][0.9]{$\qquad(0.7,0.4)$}
\psfrag{m}[c][][0.9]{$\qquad(1,0)$}
\psfrag{g}[b]{$R_2/\log K$}
\psfrag{h}[l]{$R_1/\log K$}
 \includegraphics[scale=0.45]{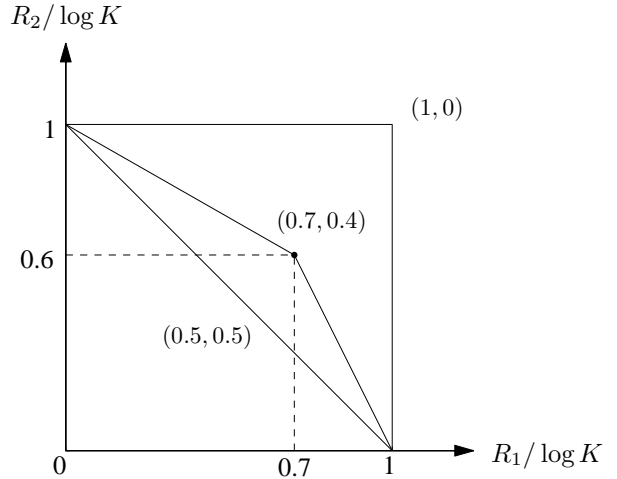}
 \caption{Capacity region of the Finite Field BC-TDCS.}\label{fig3}
 \end{figure}

\noindent Connection to wireless channels: 
\textnormal{Consider the following fading broadcast channel
\begin{align}\label{fading}
Y_j=\mathbf{H}_j^{\dagger}\mathbf{X}+Z_j\text{ for } j=1,2,
\end{align}
where $\dagger$ denotes the conjugate-transpose, $\mathbf{X}=\begin{bmatrix}X_1&X_2 \end{bmatrix}^T\in \mathbb{C}^{2\times 1}, \mathbb{E} [\mathbf{X}^{\dagger}\mathbf{X}] \leq P$, $Z_j \sim \mathcal{C}\mathcal{N}(0,1)$ and the noise sequences $Z_{ji}$, $j=1,2$ and $i\in [1:n]$, are i.i.d. In addition, for $j=1,2$, 
\begin{align*}
\mathbf{H}^{\dagger}_j =\begin{cases}
					[ h_{11} \quad h_{12}]& \text{if }S_j=1 \text{ w.p. } p_j,\\
					[h_{21}  \quad h_{22}] & \text{if }S_j=2 \text{ w.p. } \bar{p}_j,
					\end{cases}
\end{align*}%
where the channel matrix is in $\mathbb{C}^{2\times 2}$ and is full rank.}

\textnormal{We now show that the degrees of freedom (DoF) of this fading Gaussian broadcast channel, obtained by dividing the maximum sum-rate by $\log P$ and taking the limit, is $\p+\bq$.}

\textnormal{Since the variance of the noise $Z_j$ is bounded, the DoF of channel in~\eqref{fading} is equal to that of the BC-TDCS with $Y_j=\mathbf{H}_j^{\dagger}\mathbf{X}$ for $j=1,2$~\cite{Ali--Tse2010}.}
\textnormal{We show that the DoF is achieved when $U_1=f_1$ and $U_2=f_2$ are independent and Gaussian with variances $\alpha P$ and $\beta P$ for some $\alpha,\beta>0$ such that
\begin{align*}
\begin{bmatrix}X_1\\X_2\end{bmatrix}=\begin{bmatrix}
    h_{11} & h_{12}\\
    h_{21} & h_{22}
    \end{bmatrix} ^{-1}\begin{bmatrix}U_1\\U_2\end{bmatrix}
\end{align*}
satisfy the power constraint. First note that for $(R_1,R_2) \in \Cr$,
\begin{align}
&\max \lim _{P \to \infty}\frac{R_1+R_2}{\log P}\nonumber\\
&=\max_{p(\mathbf{X})} \lim _{P \to \infty}\frac{\p H(f_1)+ \bq H(f_2)+(\bp-\bq)I(f_1;f_2)}{\log P}.\label{3terms}
\end{align}}

\textnormal{Now we show that each term in~\eqref{3terms} is maximized with the chosen input. First, $\lim _{P \to \infty}\p H(f_1)/\log P = \lim_{P \to \infty}\p \log(\alpha P)/\log P = p_1$. Now we show that $\p = \max \lim _{P \to \infty}\p H(f_1)/\log P$. Since  $\var(f_1)=\var(h_{11}X_1+h_{12}X_2) = |h_{11}|^2 \gamma P+|h_{12}|^2 \bar{\gamma} P+(h_{11}^*h_{12}+h_{12}^*h_{11})\rho \sqrt{\gamma \bar{\gamma}}P$ for some $0\leq \gamma,\rho \leq 1$ due to the power constraint, $H(f_1) \leq \log(|h_{11}|^2 \gamma+|h_{12}|^2 \bar{\gamma}+(h_{11}^*h_{12}+h_{12}^*h_{11})\rho\sqrt{\gamma \bar{\gamma}})+\log P$. Hence, $\lim _{P \to \infty}\p H(f_1)/\log P \leq \p$.
Similarly, $\lim _{P \to \infty}\bq H(f_2)/\log P$ is maximized and is equal to $\bq$, and $\lim _{P \to \infty}(\bp-\bq)I(f_1;f_2)/\log P$ is maximized and is equal to $0$. Thus, the following holds: 
\begin{align*}
&\max_{p(\mathbf{X})} \lim _{P \to \infty}\frac{\p H(f_1)+ \bq H(f_2)+(\bp-\bq)I(f_1;f_2)}{\log P} \\
&\qquad= \p+\bq,
\end{align*}
and the DoF of the fading Gaussian BC in~\eqref{fading} is $\p+\bq$.}
\end{example}

\section{Proof of the Converse}\label{sec:converse}
The UV bound for the broadcast channel with state known at the receivers states that if a rate pair $(R_1,R_2)$ is achievable, then it must lie in the intersection of the regions
\begin{align*}
\bar{\Rr}_{1} = \{(R_1,R_2)\colon &R_1 \leq I(U_1;Y_1|S),\\
&R_2 \leq I(X;Y_2|S),\\
&R_1+R_2 \leq I(U_1;Y_1|S)+I(X;Y_2|U_1,S) \\
&\text{for some } p(u_1,x)\},\\ 
\bar{\Rr}_{2} = \{(R_1,R_2)\colon &R_1 \leq I(X;Y_1|S),\\
&R_2 \leq I(U_2;Y_2|S),\\
&R_1+R_2 \leq I(U_2;Y_2|S)+I(X;Y_1|U_2,S)\\
&\text{for some } p(u_2,x)\}.
\end{align*}
Denote this outer bound by $\bar \Rr$.

To establish the converse we show that every supporting hyperplane of $\bar{\Rr}$ intersects $\Cr$.
\begin{lemma}\label{lem:converse}\textnormal{For all $\lambda \geq 0$,
\begin{align}
\max_{(R_1,R_2)\in\bar{\Rr}} R_1+\lambda R_2 =\max_{(R_1,R_2)\in\Cr} R_1+\lambda R_2.
\end{align}
}\end{lemma}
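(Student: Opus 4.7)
The plan is to establish $\max_{(R_1,R_2)\in\bar\Rr} R_1 + \lambda R_2 \leq \max_{(R_1,R_2)\in\Cr} R_1 + \lambda R_2$ for every $\lambda \geq 0$; the reverse inequality is immediate because the Marton inner bound sits inside the UV outer bound, so $\Cr \subseteq \bar\Rr$. Combined with Lemma~\ref{subseteq} applied to the bounded convex closed sets $\Cr \subseteq \bar\Rr$, this forces $\bar\Rr = \Cr$ and completes the converse.

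For $0 \leq \lambda \leq 1$ I would use $\bar\Rr \subseteq \bar\Rr_1$, write $R_1 + \lambda R_2 = (1-\lambda)R_1 + \lambda(R_1+R_2)$, and bound each summand by the corresponding constraint in $\bar\Rr_1$ to obtain
\[
\max_{\bar\Rr_1} R_1 + \lambda R_2 \leq \max_{p(u_1,x)} \bigl[I(U_1;Y_1|S) + \lambda I(X;Y_2|U_1,S)\bigr].
\]
Using $Y_j = f_{S_j}(X)$, the right-hand side equals
\[
\max_{p(x)} \max_{p(u_1|x)} \bigl[\p H(f_1) + \bp H(f_2) + \alpha H(f_1|U_1) + \beta H(f_2|U_1)\bigr],
\]
with $\alpha = \lambda\q - \p \leq 0$ and $\beta = \lambda\bq - \bp$, whose sign flips at $\lambda = \bp/\bq$. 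For $\lambda \leq \bp/\bq$ both coefficients are nonpositive, so the inner maximum is attained at $U_1 = X$, producing $I(X;Y_1|S)$ and matching $\Rr_1$. For $\bp/\bq \leq \lambda \leq 1$ the candidate is $U_1 = f_1$, which yields $\p H(f_1) + \bp I(f_1;f_2) + \lambda\bq H(f_2|f_1)$---exactly the functional whose argmax defines $\Pc_1$ and the supremum of $R_1 + \lambda R_2$ over $\Rr_3$. The symmetric analysis on $\bar\Rr_2$ handles $\lambda \geq 1$, giving $U_2 = f_2 \mapsto \Rr_4$ for $1 \leq \lambda \leq \p/\q$ and $U_2 = X \mapsto \Rr_2$ for $\lambda \geq \p/\q$.

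The main obstacle is the two middle regimes, where $\alpha,\beta$ have opposite signs and neither $U = X$ nor $U = \emptyset$ is obviously optimal. For $\bp/\bq \leq \lambda \leq 1$ I would prove that for every $U_1$,
\[
\alpha H(f_1|U_1) + \beta H(f_2|U_1) \leq \beta H(f_2|f_1),
\]
the value attained at $U_1 = f_1$. This combines the data-processing chain
\[
H(f_2|U_1) - H(f_2|f_1) \leq H(f_2|U_1) - H(f_2|f_1,U_1) = I(f_1;f_2|U_1) \leq H(f_1|U_1)
\]
with the identity $\alpha + \beta = \lambda - 1 \leq 0$: multiplying the chain by $\beta \geq 0$ and adding $\alpha H(f_1|U_1)$ gives $\alpha H(f_1|U_1) + \beta H(f_2|U_1) \leq (\alpha+\beta) H(f_1|U_1) + \beta H(f_2|f_1) \leq \beta H(f_2|f_1)$. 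The mirror argument with $f_1,f_2$ swapped, using $(\p - \lambda\q) + (\bp - \lambda\bq) = 1 - \lambda \leq 0$, closes the range $1 \leq \lambda \leq \p/\q$.
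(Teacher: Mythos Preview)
Your argument is correct and matches the paper's proof essentially step for step: reduce to $\bar\Rr_1$ for $\lambda\le 1$ (resp.\ $\bar\Rr_2$ for $\lambda\ge 1$), expand $I(U_1;Y_1|S)+\lambda I(X;Y_2|U_1,S)$ into $\p H(f_1)+\bp H(f_2)+\alpha H(f_1|U_1)+\beta H(f_2|U_1)$, and split on the sign of $\beta$ to get $U_1\in\{X,f_1\}$. Your chain $H(f_2|U_1)-H(f_2|f_1)\le I(f_1;f_2|U_1)\le H(f_1|U_1)$ combined with $\alpha+\beta=\lambda-1\le 0$ is exactly equivalent to the paper's identity $(\lambda-1)H(f_1|U_1)+\beta\bigl(H(f_2|f_1,U_1)-H(f_1|f_2,U_1)\bigr)$, just organized as two inequalities rather than one identity plus a bound.
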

\begin{proof}
To prove the lemma for $0\leq \lambda \leq 1$, consider maximizing $R_1+\lambda R_2$ over $(R_1,R_2) \in \bar{\Rr}_1$. 

For any $p(u_1,x)$, $R_1+\lambda R_2$ such that $(R_1,R_2) \in \bar \Rr_1$ is maximized when $R_1=I(U_1;Y_1|S)$ and $R_2=I(X;Y_2|U_1,S)=H(Y_2|U_1,S)$. Thus,
\begin{align}
&\max_{(R_1,R_2) \in \bar{\Rr}_1}R_1+\lambda R_2=\max_{p(u_1,x)} I(U_1;Y_1|S)+\lambda H(Y_2|U_1,S)\nonumber\\
&=\max_{p(x)}\big\{H(Y_1|S)+\max_{p(u_1|x)}\{ \lambda H(Y_2|U_1,S)-H(Y_1|U_1,S)\} \big\}\nonumber\\
&=\max_{p(x)}\big\{\p H(f_1)+ \bp H(f_2)  +\max_{p(u_1|x)}\{ (\lambda \bq-\bp)H(f_2|U_1) \nonumber\\
&\qquad \qquad  + (\lambda \q-\p)H(f_1|U_1)\} \big\}.\label{g:eq4}
\end{align}
For a fixed $p(x)$ only the last two terms in~\eqref{g:eq4} depend on $p(u_1|x)$.
We now consider different ranges of $0\leq \lambda \le 1$. 
\begin{itemize}
\item If $0\leq \lambda \leq \bp/\bq$, then for any fixed $p(x)$, 
\begin{align*}
(\lambda\bq-\bp)H(f_2|U_1) + (\lambda \q-\p)H(f_1|U_1) \leq 0
\end{align*}
with equality if $U_1=X$. Thus,~\eqref{g:eq4} can be rewritten as
\begin{align*}
\max_{(R_1,R_2) \in \bar{\Rr}_1}R_1+\lambda R_2 &=\max_{p(x)}\p H(f_1)+\bp H(f_2)\\
										 &=\max_{(R_1,R_2)\in \Rr_1}R_1+\lambda R_2
\end{align*}
\item If $\bp/\bq<\lambda \leq 1$, then for any fixed $p(x)$,
\begin{align*}
&(\lambda\bq-\bp)H(f_2|U_1) + (\lambda \q-\p)H(f_1|U_1)\nonumber\\
&=(\lambda-1)H(f_1|U_1)\\
&\qquad\qquad+(\lambda\bq-\bp) (H(f_2|f_1,U_1)-H(f_1|f_2,U_1))\nonumber\\
&\leq (\lambda\bq-\bp)H(f_2|f_1)
\end{align*}
with equality if $U_1=f_1$. Thus,~\eqref{g:eq4} can be rewritten as
\begin{align*}
&\max_{(R_1,R_2) \in \bar{\Rr}_1}R_1+\lambda R_2 \\
&\qquad=\max_{p(x)}\p H(f_1)+\bp I(f_1;f_2)+\lambda\bq H(f_2|f_1)\\
&\qquad=\max_{(R_1,R_2)\in \Rr_3} R_1+\lambda R_2.
\end{align*}
\end{itemize}
Thus, $\max_{(R_1,R_2)\in \Cr}R_1+\lambda R_2 \geq \max_{(R_1,R_2)\in \bar{\Rr}}R_1+\lambda R_2 $ for $0\leq \lambda\leq 1$. Equality in the lemma holds because $\Cr \subseteq \bar{\Rr}$.
The proof for $\lambda >1$ follows similarly (see Appendix~\ref{appendix2}).
\end{proof}

The proof of the converse is completed using Lemma~\ref{subseteq}.

\section{Conclusion}        
We established the capacity region of the BC-TDCS channel when the state is known only at the receivers. This channel does not belong to any class of broadcast channels for which the capacity was previously known. 
There are several open problems that would be interesting to explore further, including: What is the capacity region of the BC-TDCS with common message when the state is known only at the receivers?
What is the capacity region when each channel component is selected from a set of more than two deterministic channel states?

\section{Acknowledgments} 
The authors thank Chandra Nair and Young-Han Kim for comments that have improved the readability of this paper.

\bibliographystyle{IEEEtran}
\bibliography{bcstate}

\begin{thebibliography}{10}
\providecommand{\url}[1]{#1}
\csname url@samestyle\endcsname
\providecommand{\newblock}{\relax}
\providecommand{\bibinfo}[2]{#2}
\providecommand{\BIBentrySTDinterwordspacing}{\spaceskip=0pt\relax}
\providecommand{\BIBentryALTinterwordstretchfactor}{4}
\providecommand{\BIBentryALTinterwordspacing}{\spaceskip=\fontdimen2\font plus
\BIBentryALTinterwordstretchfactor\fontdimen3\font minus
  \fontdimen4\font\relax}
\providecommand{\BIBforeignlanguage}[2]{{%
\expandafter\ifx\csname l@#1\endcsname\relax
\typeout{** WARNING: IEEEtran.bst: No hyphenation pattern has been}%
\typeout{** loaded for the language `#1'. Using the pattern for}%
\typeout{** the default language instead.}%
\else
\language=\csname l@#1\endcsname
\fi
#2}}
\providecommand{\BIBdecl}{\relax}
\BIBdecl

\bibitem{dstate}
H.~Kim, Y.-K. Chia, and A.~El~Gamal, ``A note on broadcast channels with stale
  state information at the transmitter,'' \emph{CoRR}, vol. abs/1309.7437,
  2013.

\bibitem{El-Gamal--Kim2011}
A.~El~Gamal and Y.~H. Kim, \emph{Network Information Theory}, 1st~ed.\hskip 1em
  plus 0.5em minus 0.4em\relax Cambridge University Press, 2011.

\bibitem{Chandra2014}
Y.~Geng, A.~Gohari, C.~Nair, and Y.~Yu, ``On marton's inner bound and its
  optimality for classes of product broadcast channels,'' \emph{Information
  Theory, IEEE Transactions on}, vol.~60, no.~1, pp. 22--41, 2014.

\bibitem{Jafarian--Vishwnath2011}
A.~Jafarian and S.~Vishwanath, ``The two-user gaussian fading broadcast
  channel,'' in \emph{Information Theory Proceedings (ISIT), 2011 IEEE
  International Symposium on}, 2011, pp. 2964--2968.

\bibitem{Liang--Goldsmith2005}
Y.~Liang and A.~Goldsmith, ``Rate regions and optimal power allocation for td
  fading broadcast channels without csit,'' in \emph{Allerton, Monticello IL},
  2005.

\bibitem{tse_fadingBC}
D.~Tse and R.~Yates, ``Fading broadcast channels with state information at the
  receivers,'' \emph{Information Theory, IEEE Transactions on}, vol.~58, no.~6,
  pp. 3453--3471, 2012.

\bibitem{Abbe--Lizhong2009}
E.~Abbe and L.~Zheng, ``Coding along hermite polynomials for gaussian noise
  channels,'' in \emph{Information Theory, 2009. ISIT 2009. IEEE International
  Symposium on}, 2009, pp. 1644--1648.

\bibitem{reza}
R.~Khosravi-Farsani and F.~Marvasti, ``Capacity bounds for multiuser channels
  with non-causal channel state information at the transmitters,'' in
  \emph{Information Theory Workshop (ITW), 2011 IEEE}, 2011, pp. 195--199.

\bibitem{lapidoth_semideterministic}
A.~Lapidoth and L.~Wang, ``The state-dependent semideterministic broadcast
  channel,'' \emph{Information Theory, IEEE Transactions on}, vol.~59, no.~4,
  pp. 2242--2251, 2013.

\bibitem{Larsson--Johansson2006}
P.~Larsson and N.~Johansson, ``Multi-user arq,'' in \emph{Vehicular Technology
  Conference, 2006. VTC 2006-Spring. IEEE 63rd}, vol.~4, 2006, pp. 2052--2057.

\bibitem{Georgiadis--Tassiulas2009}
L.~Georgiadis and L.~Tassiulas, ``Broadcast erasure channel with feedback -
  capacity and algorithms,'' in \emph{Network Coding, Theory, and Applications,
  2009. NetCod '09. Workshop on}, 2009, pp. 54--61.

\bibitem{Ali--Tse2010}
M.~Maddah-Ali and D.~Tse, ``Completely stale transmitter channel state
  information is still very useful,'' in \emph{Information Theory, IEEE
  Transactions on}, vol.~58, no.~7, 2012, pp. 4418--4431.

\bibitem{Shayevitz--Wigger2013}
O.~Shayevitz and M.~Wigger, ``On the capacity of the discrete memoryless
  broadcast channel with feedback,'' \emph{{IEEE} Trans. Inf. Theory}, vol.~59,
  no.~3, pp. 1329--1345, 2013.

\bibitem{Marton1979}
K.~Marton, ``A coding theorem for the discrete memoryless broadcast channel,''
  \emph{Information Theory, IEEE Transactions on}, vol.~25, no.~3, pp.
  306--311, 1979.

\bibitem{Nair--El-Gamal2006}
C.~Nair and A.~El~Gamal, ``An outer bound to the capacity region of the
  broadcast channel,'' in \emph{Information Theory, 2006 IEEE International
  Symposium on}, 2006, pp. 2205--2209.

\bibitem{Eggleston1958}
H.~G. Egglestone, \emph{Convexity}.\hskip 1em plus 0.5em minus 0.4em\relax
  Cambridge University Press, Cambridge, 1958.

\bibitem{blackwell}
D.~Blackwell, L.~Breiman, and A.~J. Thomasian, ``Proof of shannon's
  transmission theorem for finite-state indecomposable channels,'' \emph{The
  Annals of Mathematical Statistics}, vol.~29, no.~4, pp. pp. 1209--1220, Dec.
  1958.

\end{thebibliography}

\appendices

\section{Proof of Lemma~\ref{lem4}}\label{appendix}
 We prove the lemma for $0\leq \lambda \leq 1$. The proof for $\lambda >1$ follows similarly. First we show that for $0\leq \lambda \leq 1$,
\begin{align*}
&\max_{(R_1,R_2)\in \Cr}R_1+\lambda R_2 = \max_{i=1,2,3,4}\big\{\max_{(R_1,R_2)\in \Rr'_i}R_1+\lambda R_2\big\}\\
&\qquad\stackrel{(a)}{=}\max_{i=1,3}\bigg\{\max_{(R_1,R_2)\in\Rr'_i}R_1+\lambda R_2\bigg\}\\
&\qquad\stackrel{(b)}{=}\begin{cases} \max_{(R_1,R_2)\in \Rr_1}R_1+\lambda R_2 \text{ if }0 \leq \lambda \leq \bp/\bq,\\
\max_{(R_1,R_2)\in \Rr_3}R_1+\lambda R_2 \text{ if } \bp/\bq < \lambda \leq 1.\\ 
\end{cases}
\end{align*}

The equality in $(a)$ holds because \begin{align*}
&\max_{(R_1,R_2)\in \Rr'_3}R_1+\lambda R_2\\
&\quad =\max_{p(x)}I(f_1;Y_1|S)+\lambda I(f_2;Y_2|S)- \lambda I(f_1;f_2) \\
&\quad \geq \max_{p(x)}I(f_1;Y_1|S)- I(f_1;f_2)+\lambda I(f_2;Y_2|S)\\
&\quad =\max_{(R_1,R_2)\in \Rr'_4}R_1+\lambda R_2, \text{ and}\\
&\max_{(R_1,R_2)\in \Rr'_4}R_1+\lambda R_2\\
&\quad =\max_{p(x)} \lambda\q H(f_1)+\lambda\bq H(f_2)+(\p-\lambda\q)H(f_1|f_2)\\
&\quad \geq \max_{p(x)} \lambda\q H(f_1)+\lambda\bq H(f_2)=\max_{(R_1,R_2)\in \Rr'_2}R_1+\lambda R_2.
\end{align*}
To derive the equality in $(b)$ note that
\begin{align*}
&\max_{(R_1,R_2)\in \Rr'_1} R_1+\lambda R_2= \max_{p(x)}\{\p H(f_1)+\bp H(f_2)\},\\
&\max_{(R_1,R_2)\in \Rr'_3} R_1+\lambda R_2\\
&\qquad=\max_{p(x)}\{\p H(f_1)+\bp H(f_2) +(\lambda\bq-\bp)H(f_2|f_1)\}.
\end{align*}
For $0\leq \lambda\leq \bp/\bq$, $\max_{(R_1,R_2)\in\Rr'_1}R_1+\lambda R_2 \geq \max_{(R_1,R_2)\in \Rr'_3} R_1+\lambda R_2$, and for $\bp/\bq < \lambda \leq 1$, $\max_{(R_1,R_2)\in\Rr'_3}R_1+\lambda R_2\geq \max_{(R_1,R_2)\in \Rr'_1} R_1+\lambda R_2$. Finally, $(b)$ holds since $\max_{(R_1,R_2)\in\Rr'_3}R_1+\lambda R_2 = \max_{(R_1,R_2)\in\Rr_3}R_1+\lambda R_2$ for $\bp/\bq < \lambda \leq 1$ and $\Rr'_1=\Rr_1$.

Thus, for $0\leq \lambda \leq 1$,  $\max_{(R_1,R_2)\in \Cr_0}R_1+\lambda R_2 \geq \max_{(R_1,R_2)\in \Cr}R_1+\lambda R_2$. Finally, equality holds because $\Cr_0\subseteq \Cr$.

\section{Proof of Lemma~\ref{lem:converse} for $\lambda>1$}\label{appendix2}

For $\lambda > 1$, we consider the equivalent maximization problem: $\max_{(R_1,R_2) \in \bar{\Rr}_2}\lambda^{-1}R_1+ R_2$ for $\lambda^{-1} <1$.

For any $p(u_2,x)$, among the $(R_1,R_2) \in \bar{\Rr}_2$, $ \lambda^{-1}R_1+R_2$ is maximized when $R_2=I(U_2;Y_2|S)$ and $R_1=H(Y_1|U_2,S)$. Thus,
\begin{align}
&\max_{(R_1,R_2) \in \bar{\Rr}_2}\lambda^{-1}R_1+R_2\nonumber\\
&\qquad =\max_{p(u_2,x)} \lambda^{-1} H(Y_1|U_2,S)+I(U_2;Y_2|S)\nonumber\\
&\qquad =\max_{p(x)}\big\{H(Y_2|S)+\max_{p(u_2|x)}\{\lambda^{-1} H(Y_1|U_2,S) \nonumber \\
&\qquad \qquad -H(Y_2|U_2,S)\} \big\}\nonumber\\
&\qquad =\max_{p(x)}\big\{\q H(f_1)+ \bq H(f_2) \nonumber\\
&\qquad  \qquad \qquad  +\max_{p(u_2|x)}\{(\lambda^{-1} \bp-\bq)H(f_2|U_2)\nonumber\\
&\qquad \qquad \qquad \qquad + (\lambda^{-1}\p-\q)H(f_1|U_2)\} \big\}.\label{gg:eq4}
\end{align}
For a fixed $p(x)$,  only the last two terms in~\eqref{gg:eq4} depend on $p(u_2|x)$. We now consider different ranges of $\lambda > 1$.

\begin{itemize}
\item If $\lambda > \p/\q$, then for any fixed $p(x)$, 
\begin{align*}
(\lambda^{-1}\bp-\bq)H(f_2|U_2) + (\lambda^{-1} \p-\q)H(f_1|U_2) \leq 0
\end{align*}
with equality if $U_2=X$. Then,~\eqref{gg:eq4} can be expressed as
\begin{align*}
\max_{(R_1,R_2) \in \bar{\Rr}_2}R_1+\lambda R_2 &=\max_{p(x)}\lambda \q H(f_1)+\lambda \bq H(f_2)\\
										&=\max_{(R_1,R_2)\in\Rr_2}R_1+\lambda R_2.
\end{align*}
\item If $1 < \lambda \leq \p/\q$, then for any fixed $p(x)$,
\begin{align*}
&(\lambda^{-1}\bp-\bq)H(f_2|U_2) + (\lambda^{-1}\p-\q)H(f_1|U_2)\nonumber\\
&=(\lambda^{-1}-1)H(f_2|U_2)\nonumber\\
&\qquad+(\lambda^{-1}\p-\q)\{H(f_1|f_2,U_2)-H(f_2|f_1,U_2)\}\nonumber\\
&\leq (\lambda^{-1}\p-\q)H(f_1|f_2)
\end{align*}
with equality if $U_2=f_2$. Then,~\eqref{gg:eq4} can be expressed as
\begin{align*}
&\max_{(R_1,R_2) \in \bar{\Rr}_2}R_1+ \lambda R_2\\
&\qquad = \max_{p(x)}\p H(f_1|f_2)+\lambda\q I(f_1;f_2) +\lambda\bq H(f_2)\\
&\qquad=\max_{(R_1,R_2)\in\Rr_4}R_1+\lambda R_2.
\end{align*}
\end{itemize}

Thus, $\max_{(R_1,R_2)\in \Cr}R_1+\lambda R_2 \geq \max_{(R_1,R_2)\in \bar{\Rr}}R_1+\lambda R_2$ for $\lambda > 1$. Finally, equality holds because $\Cr \subseteq \bar{\Rr}$.

\end{document}